\newtheorem{lemma}{Lemma}
\newcommand{\pr}
{
\text{pr}
}
\begin{document}
\title{A Local Updating Algorithm for Personalized PageRank via Chebyshev Polynomials}

\author{\IEEEauthorblockN{E. Bautista$^{1}$, M. Latapy$^{1}$,}
\IEEEauthorblockA{
$^{(1)}$ Sorbonne Université, CNRS, LIP6, F-75005 Paris, France \\
\vspace*{-0.8cm}}
}

\maketitle

\begin{abstract}
The personalized PageRank algorithm is one of the most versatile tools for the analysis of networks. In spite of its ubiquity, maintaining personalized PageRank vectors when the underlying network constantly evolves is still a challenging task. To address this limitation, this work proposes a novel distributed algorithm to locally update personalized PageRank vectors when the graph topology changes. The proposed algorithm is based on the use of Chebyshev polynomials and a novel update equation that encompasses a large family of PageRank-based methods. In particular, the algorithm has the following advantages: (i) it has faster convergence speed than state-of-the-art alternatives for local PageRank updating; and (ii) it can update the solution of recent generalizations of PageRank for which no updating algorithms have been developed. Experiments in a real-world temporal network of an autonomous system validate the effectiveness of the proposed algorithm.

\end{abstract}
\IEEEpeerreviewmaketitle

\section{Introduction}

\subsection{Context}
The personalized PageRank algorithm is one of the great successes in network science. It has been used in a wide amount of applications that include ranking of websites \cite{Page_1999_Pagerank, Ding_2003_PageRank, Haveliwala_2003_topic}, clustering data \cite{Chung_2009_Local, Andersen_2006_Local, Tabrizi_2013_Personalized}, classifying objects \cite{Avrachenkov_2012_Classification, Merkurjev_2018_Semi, Dostal_2014_Exploration, Avrachenkov_2012_Generalized}, detection of anomalous events \cite{Fontugne_2019_BGP, Yoon_2019_Fast, Yao_2012_Anomaly}, or recommender systems \cite{AlJanabi_2019_Recommendation, Zhang_2009_Collaborative, Nguyen_2015_Evaluation}, to name a few. Additionally, the numerous theoretical properties of personalized PageRank \cite{Langville_2004_Deeper, Ipsen_2006_Mathematical, Brezinski_2006_Pagerank, Pretto_2002_Theoretical} have recently been leveraged to propose a large family of PageRank extensions used in the semi-supervised learning setting, allowing to tackle challenging settings such as signed graphs \cite{Bautista_2019_Gamma}, anomalous diffusion processes \cite{DeNigris_2017_Fractional}, infinite dimensional spaces \cite{Mai_2017_Counterintuitive}, Sobolev spaces \cite{Zhou_2011_Semi}, or time-graph dual spaces \cite{Girault_2014_semi}. However, despite the success of all these PageRank-based algorithms, they still pose problems from the algorithmic point of view. In particular, they do not adapt well to networks that evolve over time, a situation that we aim to address in this work.

\subsection{Related Works}
In its simplest form, personalized PageRank can be interpreted as the stationary distribution of a random walk \cite{Page_1999_Pagerank}. This interpretation has been exploited by numerous works to develop efficient algorithms for computing PageRank \cite{Haveliwala_1999_Efficient, Kamvar_2004_Adaptive, Fujiwara_2012_Efficient, Bahmani_2011_Fast, Maehara_2014_Computing, Avrachenkov_2007_Monte, Andersen_2006_Local, Berkhin_2006_Bookmark}. However, in cases where the graph evolves over time, such stationary distribution drifts and needs to be updated. While the algorithms above allow to efficiently recompute a personalized PageRank vector, proceeding in this way becomes impractical for very large networks that constantly evolve, such as the web graph that has been reported to have $60\times10^{12}$ nodes and evolve at a rate of $600\times 10^3$ new pages created every second \cite{Ohsaka_2015_Efficient}. Additionally, several works \cite{Ipsen_2006_Mathematical, Brezinski_2006_Pagerank, Pretto_2002_Theoretical} have studied the impact of graph perturbations to PageRank vectors, showing that (i) the magnitude of the change to a PageRank vector is upper bounded by the size of the perturbation (magnitude of the change to the graph transition matrix); and (ii) the PageRank vector mostly changes in the entries associated to nodes close to the perturbed area. This implies that if a perturbation is small, then it is wasteful to recompute the PageRank vector from scratch. A better alternative consists in only updating the scores of nodes close to the perturbation. In the literature, this challenge is commonly referred to as local PageRank updating. There are three reference methods. Firstly, \cite{Bahmani_2010_Fast} proposes a Monte-Carlo approach where multiple random walkers are run to estimate the entries of the PageRank vector based on how frequently walkers visit nodes. Secondly, the work of \cite{Yoon_2018_Fast} exploits the random walk with restart interpretation of personalized PageRank to show that the stationary distribution of this walk can be updated by running a local diffusion process in the affected area. Then, \cite{Yoon_2018_Fast} runs this local process in a distributed fashion via the power method. Thirdly, the work of \cite{Ohsaka_2015_Efficient} proposes a centralized updating algorithm based on the use of a residual and an approximation vector, sequentially pushing mass from the residual vector into the approximation vector using Gauss-Southwell update rules. While these works have been subject of deep theoretical studies \cite{Zhang_2016_Approximate} and successfully applied in practice \cite{yoon2020autonomous}, they still suffer from two main limitations: (i) they have slow convergence rates, particularly when trying to attain very small approximation errors; and (ii) they fully rely on the random walk interpretation of PageRank and thus cannot be used to update the novel generalizations of PageRank used in semi-supervised learning \cite{Bautista_2019_Gamma, DeNigris_2017_Fractional, Mai_2017_Counterintuitive, Zhou_2011_Semi, Girault_2014_semi} which rely on more complex dynamical processes. 

\subsection{Goals, Contributions and Outline}
In this work, we address the two limitations listed above. We propose a novel local updating algorithm based on Chebyshev polynomials. These polynomials have already been used to efficiently approximate personalized PageRank vectors from scratch \cite{Bautista_2019_Laplacian}, showing faster convergence than Gauss-Southwell and power iteration based methods (the building blocks of \cite{Ohsaka_2015_Efficient} and \cite{Yoon_2018_Fast}, respectively). However, Chebyshev polynomials have not yet been considered in the updating setting because they lack the ability to set an initial guess, which all the updating algorithms described above rely upon. Thus, the core of our proposal relies on a novel updating equation that is tailored for the large family of PageRank methods mentioned above. Concretely, it allows us to (i) use a previous personalized PageRank vector as an initial guess in the Chebyshev context; (ii) cast the updating challenge as the task of running a local diffusion process that we efficiently compute via the Chebyshev polynomials; (iii) update any of the recent generalized formulations of PageRank \cite{Bautista_2019_Gamma, DeNigris_2017_Fractional, Mai_2017_Counterintuitive, Zhou_2011_Semi, Girault_2014_semi} among which classical personalized PageRank arises as a special case.

The paper is organized as follows: Section \ref{sec2} sets definitions and reviews local updating methods. Section \ref{sec3} presents Chebyshev polynomials and introduces the proposed algorithm. Section \ref{sec4} numerically evaluates the algorithm. Section \ref{sec5} concludes the work.

\section{Definitions and State-of-the-art}\label{sec2}
\subsection{Definitions}
Let $\mathcal{G} (\mathcal{V}, \mathcal{E}, W)$ be an undirected weighted graph where $W$ refers to the adjacency matrix, $\mathcal{E}$ is the set of edges and $\mathcal{V}$ is the set of vertices. By $D = diag(d_1, \cdots, d_N)$ we denote the diagonal matrix of degrees, where $d_i = \sum_{j}W_{ij}$. The combinatorial Laplacian and the random walk transition matrices are denoted by $L = D - W$ and $P = D^{-1}W$, respectively. We refer by $u \sim v$ that $u$ is adjacent to $v$. Additionally, let $\pr_{\alpha} (y)$ denote the personalized PageRank vector of $\mathcal{G}$ with restarting probability $\alpha$ and initial condition $y$. Personalized PageRank is defined as the solution to the fixed point equation
\begin{equation}\label{eq.pagerank.fixedpoint}
\pr_{\alpha} (y) = (1-\alpha) y + \alpha P^T \pr_{\alpha}(y).
\end{equation}
Let $\widetilde{\mathcal{G}}(\widetilde{\mathcal{V}},  \widetilde{\mathcal{E}}, \widetilde{W} )$ denote an undirected weighted graph which is an evolved or perturbed version of $\mathcal{G}$. In analogy to $\mathcal{G}$, the matrices $\widetilde{D}$, $\widetilde{L}$ and $\widetilde{P}$ refer to the degree, Laplacian and random walk transition matrices of $\widetilde{\mathcal{G}}$, respectively. Additionally, we let $\widetilde{\pr}_{\alpha}(y)$ be the personalized PageRank vector of $\widetilde{\mathcal{G}}$ with restarting probability $\alpha$ and initial condition $y$. The updating challenge refers to the problem of estimating $\widetilde{\pr}_{\alpha}(y)$ from $\pr_{\alpha}(y)$. To have consistent sized matrices, we model nodes joining/leaving the network as isolated nodes that get connected/disconnected. By definition, isolated nodes correspond to zero rows and columns in the graph matrices and their degree and inverse degree are zero. 

\subsection{State-of-the-art Approaches for Local PageRank Updating}
In this section, we briefly review the state-of-the-art methods of \cite{Yoon_2018_Fast} and \cite{Ohsaka_2015_Efficient} for local PageRank updating. We focus on these two approaches because they are deterministic methods that can update a PageRank vector up to any desired accuracy, hence they directly compare to our proposed algorithm that we describe in Section \ref{sec3}.

{\bf Random Walk with Restart (RWR) \cite{Yoon_2018_Fast}}. This work is rooted in the power iteration method, which guarantees that the recursive formula
\begin{equation}\label{pr.powerit.recursion}
\widetilde{p}^{(t)} = (1-\alpha) y + \alpha \widetilde{P}^T \widetilde{p}^{(t-1)}
\end{equation} converges to $\widetilde{\pr}_\alpha(y)$ when $t \to \infty$. The authors take benefit of the possibility to set an initial guess $\widetilde{p}^{(0)}$, also known as warm restart. They set $\widetilde{p}^{(0)} = \pr_\alpha(y)$. Since it is guaranteed that a PageRank vector does not substantially change under small perturbations, this warm restart sets the trajectory of the recursive equation (\ref{pr.powerit.recursion}) very close to the fixed point, which helps to drastically reduce the number of iterations needed to obtain a good approximation. Notice however that, even though this reduces the number of iterations, the initial guess $\widetilde{p}^{(0)}$ is a dense vector (most entries are non-zero). Thus, computations are needed for all vertices in the graph albeit the update only occurs locally. The authors of \cite{Yoon_2018_Fast} address this issue by showing that (\ref{pr.powerit.recursion}), under a warm restart, can be rewritten as:
\begin{equation}\label{pr.powerit.update}
\widetilde{\pr}_\alpha(y) = \pr_\alpha(y) + \frac{1}{(1-\alpha)} \widetilde{\pr}_{\alpha}(r),
\end{equation}
where $r =  \alpha [ \widetilde{P}^T - P^T ] \pr_{\alpha}(y)$. Eq. (\ref{pr.powerit.update}) shows that updating an existing PageRank vector amounts to computing another PageRank vector with an initial seed $r$ that is completely localized (only non-zero) in the 1-hop vicinity of the perturbed nodes. Clearly, $\widetilde{\pr}_{\alpha}(r)$ can be computed by setting $y = r$ and $\widetilde{p}^{(0)} = r$ in (\ref{pr.powerit.recursion}), showing that running the recursion for a few iterations is essentially equivalent to locally diffusing $r$ to the affected nodes and updating their values.

{\bf Push method \cite{Ohsaka_2015_Efficient}}. This work is rooted in the Gauss-Southwell recursive formula which uses two vectors to approximate $\widetilde{\pr}_\alpha(y)$: an approximation vector $\widetilde{p}$ and a residual vector $\widetilde{r}$ coding the difference between $\widetilde{\pr}_{\alpha}(y)$ and $\widetilde{p}$. The method starts with an initial guess $\widetilde{p}^{(0)}$ and then, at iteration $t$, transfers mass from $\widetilde{r}^{(t -1)}$ into $\widetilde{p}^{(t)}$ such that the following invariant is preserved 
\begin{equation}\label{push.invariant.eq}
\widetilde{\pr}_\alpha(y) = \widetilde{p}^{(t)} + \frac{1}{1- \alpha} \widetilde{\pr}_\alpha(\widetilde{r}^{(t)}). 
\end{equation}
Clearly, minimizing the entries of $\widetilde{r}$ imply that $\widetilde{p}$ converges to $\widetilde{\pr}_\alpha(y)$. To attain this, if at iteration $t$ the largest entry of $\widetilde{r}^{(t)}$ corresponds to vertex $u$, then the state of the algorithm at iteration $t+1$ is determined by the following set of update equations:
\begin{align}
\widetilde{p}^{(t+1)} &= \widetilde{p}^{(t)} + \widetilde{r}^{(t)}_u \delta_u. \\
\widetilde{r}^{(t+1)} &= \widetilde{r}^{(t)} - \widetilde{r}^{(t)}_u \delta_u + \alpha \widetilde{P}^T \widetilde{r}^{(t)}_u \delta_u.
\end{align}
This procedure is repeated until all entries from the residual diminish below some user-defined threshold. The authors of \cite{Ohsaka_2015_Efficient} show that setting $\widetilde{p}^{(0)} = \pr_{\alpha}(y)$ and $\widetilde{r}^{(0)} = \alpha [  \widetilde{P}^T - P^T ]\pr_{\alpha}(y)$ preserve the invariant (\ref{push.invariant.eq}). While $\widetilde{r}^{(0)}$ is similar to the residual of the RWR method, in the push method it is used differently: the locality of $\widetilde{r}^{(0)}$ implies that only a few entries can surpass the tolerance threshold, meaning that only a few push operations are needed to drive them below the threshold again and obtain a good approximation of the evolved PageRank vector.

\section{Proposed Method}\label{sec3}
\subsection{PageRank Computation via Chebyshev Polynomials}\label{sec3.1}
\begin{algorithm}[t]
\caption{\cite{Shuman_2011_Chebyshev} Distributed application of a matrix function to a vector via Chebyshev polynomials}
\label{ChebyshevPoly.alg}
\begin{algorithmic}
        \State \textbf{Input at node u}: $\lambda_{max}$, $y_u$, $\mu$, $\mathcal{R}_{uv} ~ \forall ~ v \sim u$, $K$, and $\{c_t: t = 0, \dots, K\}$. 
        \State \textbf{Output at node u}: $f^{(K)}_u$
        \State $(\bar{T}_0(\mathcal{R}) y)_u = y_u$
        \State Transmit $y_u$ to all neighbours $v \sim u$
        \State Receive $y_v$ from all neighbours $v \sim u$
        \State $(\bar{T}_1(\mathcal{R}) y)_u = \sum\limits_{v = \{v \sim u\} \cup u} \frac{1}{\phi} \mathcal{R}_{uv} y_v - y_u $
        
        \For{$t = 2 : K$} \vspace{5pt}
        \State Transmit $(\bar{T}_{t-1}(\mathcal{R}) y)_u$ to all neighbours $v \sim u$ \vspace{5pt}
        \State Receive $(\bar{T}_{t-1}(\mathcal{R}) y)_v$ from all neighbors $v \sim u$ \vspace{5pt}
        \State $(\bar{T}_t(\mathcal{R}) y)_u = \sum\limits_{v = \{v \sim u\} \cup u} \frac{2}{\phi} (\bar{T}_{t-1}(\mathcal{R}) y)_v - 2 (\bar{T}_{t-1}(\mathcal{R}) y)_u  - (\bar{T}_{t-2} (\mathcal{R})y)_v  $
        \EndFor \vspace{5pt}
        \State Return $f^{(K)}_u = \frac{1}{2}c_0 y_u + \sum_{t = 1}^K c_t (\bar{T}_t(\mathcal{R}) y)_u  $
\end{algorithmic}
\end{algorithm}
In this subsection, we present the Chebyshev polynomials, which are a general technique to approximate matrix functions and form the basis of our algorithm derived in Section \ref{sec3.2}. As discussed in the introduction, the two main drawbacks of current local PageRank updating algorithms are that (i) they are slow to converge; and (ii) they cannot address the novel generalizations of personalized PageRank that no longer rely on random walk processes. Notably, the Chebyshev polynomials carry the potential to address these limitations because (i) they can account for general operators, thus covering the recent generalization of PageRank in the literature; and (ii) they converge faster than methods based on power-iteration and Gauss-Southwell rules when computing PageRank vectors from scratch \cite{Bautista_2019_Laplacian}. 

In the context of signal processing on graphs, the Chebyshev polynomials were introduced in \cite{Shuman_2011_Chebyshev} as a mean to approximate functions of a graph matrix, achieving considerable success in the contexts of graph signal filtering \cite{cheng_2019_iterative, tseng_2021_minimax, tian_2014_chebyshev} and graph neural networks \cite{defferrard_2016_convolutional, yan_2021_spatial}. They operate as follows: let $\mathcal{R}$ denote a graph matrix and $h(\mathcal{R})$ be a function of it. Then, \cite{Shuman_2011_Chebyshev} shows that $h(\mathcal{R})$ can be approximated by means of the truncated series:
\begin{equation}\label{}
h(\mathcal{R}) \approx \frac{1}{2}c_0 + \sum_{t = 1}^K  c_t \bar{T}_t(\mathcal{R}),
\end{equation}
where
\begin{equation}\label{Chap4_ChebyPol.eq}
\bar{T}_t(\mathcal{R}) = \begin{cases}
1, & t = 0 \\
\frac{\mathcal{R} - \phi}{\phi}, & t = 1 \\
2\left(\frac{\mathcal{R} - \phi}{\phi}\right) \bar{T}_{t-1} - \bar{T}_{t-2}, & t \geq 2
\end{cases}
\end{equation}
$\phi = \lambda_{max}/2$, $c_t = \frac{2}{\pi} \int_0^{\pi} cos(t \theta) h(\phi (cos(\theta)+1) ) d\theta$, and $\lambda_{max}$ is the spectral radius bound of $\mathcal{R}$. Eq. (\ref{Chap4_ChebyPol.eq}) is known as the Chebyshev polynomial approximation of $h(\mathcal{R})$. One of its assets is that it allows to approximate the result of multiplying $h(\mathcal{R})$ with a vector $y$ in a distributed fashion. To see this, let us recall that $y$ can be interpreted as a signal that lives on the vertices of the graph encoded by $\mathcal{R}$. Thus, if nodes are given communication and computation capabilities, each node can compute its own value of $h(\mathcal{R}) y$ by transmitting and receiving messages to and from their neighbors. This distributed algorithm is detailed in Algorithm \ref{ChebyshevPoly.alg}. While \cite{Bautista_2019_Laplacian} shows that Chebyshev polynomials can converge to personalized PageRank vectors significantly faster than power iteration and Gauss-Southwell methods from scratch, they have not been considered in the updating setting because they do not offer the possibility to set an initial guess.

\subsection{Local PageRank updating via Chebyshev polynomials}\label{sec3.2}
\begin{table*}[t!]
\footnotesize
  \centering
    \begin{tabular}{|c|c|c|c|c|c|c|c|}\toprule 
{\bf Method} & Std. PageRank \cite{Page_1999_Pagerank} & $L^\gamma$-PageRank \cite{Bautista_2019_Gamma} & Iter. PageRank \cite{Zhou_2011_Semi} & Recentered kernel \cite{Mai_2017_Counterintuitive} & Time-graph dual \cite{Girault_2014_semi} & Anom. Diffusion \cite{DeNigris_2017_Fractional} \\ \midrule
{\bf{$\mathcal{R} $}} & $LD^{-1}$ & $ L^{\gamma}D_{\gamma}^{-1}$  & $(L D^{-1})^m$ & $-PWP$, $P = \mathbb{I} - \frac{1}{N} \mathbbm{1}\mathbbm{1}^T$ & $D^{-\sigma} L D^{\sigma - 1}$ & $D_{\gamma}^{-\sigma} L^{\gamma} D_{\gamma}^{\sigma - 1}$ \\ \bottomrule
 \end{tabular}
\caption{Possible choices of the reference operator $\mathcal{R}$ reported in the literature.}
\label{table.operators}
\end{table*}
In this subsection, we detail our main contribution: a Chebyshev polynomial-based local updating algorithm tailored for a large family of PageRank methods. We start by noticing that standard personalized PageRank and the novel generalizations used in semi-supervised learning \cite{Bautista_2019_Gamma, DeNigris_2017_Fractional, Mai_2017_Counterintuitive, Zhou_2011_Semi, Girault_2014_semi} can all be framed under one same formalism in terms of matrix functions. To show this, let us introduce the change of variable $\alpha = \frac{1}{\mu + 1}$. It is easy to see that the PageRank fixed point equation (\ref{eq.pagerank.fixedpoint}) can be rewritten as the partial differential equation: $LD^{-1}\pr_{\mu}(y) + \mu \pr_{\mu}(y) = \mu y$, where $LD^{-1} = (\mathbb{I} - P^T)$ is the so-called random walk Laplacian. This expression implies that PageRank can alternatively be interpreted as the equilibrium state of a dynamical process driven by a discrete Helmholtz equation in which the dynamics are ruled by the operator $LD^{-1}$. Notably, several of the novel generalizations of personalized PageRank admit the same interpretation, allowing their solutions to be expressed in the following general form:
\begin{equation}\label{eq.helmholts}
    \mathcal{R}\pr_{\mu}(y) + \mu \pr_{\mu}(y) = \mu y,
\end{equation}
where $\mathcal{R}$ denotes a generalized reference operator associated to graph $\mathcal{G}$. Thus, the only difference among several personalized PageRank generalizations is the choice of the operator $\mathcal{R}$. In Table \ref{table.operators}, we list some of the possible choices of $\mathcal{R}$ and the methods associated to them.

Clearly, the advantage of Eq. (\ref{eq.helmholts}) is that any algorithm that we derive based on it automatically covers a large family of PageRank methods. Therefore, the updating algorithm we propose in this work is an algorithm to update the solution of Eq. (\ref{eq.helmholts}). To derive our updating algorithm, we start by noticing that the solution of Eq. (\ref{eq.helmholts}), for an evolved graph $\widetilde{\mathcal{G}}$, can be expressed as a matrix function of $\widetilde{\mathcal{R}}$ in the following way
\begin{equation}\label{sol.proposed.scratch}
\widetilde{pr}_{\mu}(y) = \mu(\widetilde{\mathcal{R}} + \mu \mathbb{I})^{-1}y.
\end{equation}
While Eq. (\ref{sol.proposed.scratch}) can be leveraged to efficiently compute $\widetilde{pr}_{\mu}(y)$ from scratch (for instance via the Chebyshev polynomials), it is not useful in an updating scenario because it does not allow to set an initial guess. Therefore, our first goal is to derive a recursive equation that converges to (\ref{eq.helmholts}), which we can then use to set $\pr_{\mu}(y)$ as an initial guess. A natural way to proceed is by developing Eq. (\ref{sol.proposed.scratch}) in its geometric series. However, we stress that this approach results in a recursive expression $\widetilde{p}^{(t)} = y + \frac{1}{\mu} \widetilde{\mathcal{R}} \widetilde{p}^{(t-1)}$ that only converges to $\widetilde{\pr}_{\mu}(y)$ when $\mu > \lambda_{max}$. To obtain a recursive equation that converges for all $\mu > 0$ (hence for all $\alpha \in (0, 1]$), we perform the following transformations: (i) we normalize the spectrum (eigenvalues) of $\widetilde{\mathcal{R}}$ to the range $[-1, 1]$; and (ii) we map the spectral domain of the matrix function $h(\widetilde{\mathcal{R}}) = \mu(\widetilde{\mathcal{R}}+ \mu \mathbb{I})^{-1} $ to the range $[-1, 1]$. By applying transformation (i), we obtain the new operator:
\begin{equation}\label{eq.renormalization}
\widetilde{\mathcal{S}} = (2/\lambda_{max}) \widetilde{\mathcal{R}} - \mathbb{I}.
\end{equation}
We refer to the diagonal matrices of eigenvalues of $\widetilde{\mathcal{R}}$ and $\widetilde{\mathcal{S}}$ by $\Lambda$ and $s$, respectively. Then, by applying transformation (ii) we obtain a new matrix function now depending on $s$:
\begin{align}\label{proposed.transf.ii}
h(\Lambda) &= \frac{\mu}{\Lambda + \mu} \\
            &= \frac{\mu}{ (\lambda_{max}/2)(s + 1) + \mu} \\ 
            &=  \frac{2 \mu / \lambda_{max} }{s + 1 + (2\mu / \lambda_{max}) } = h(s),
\end{align}
where, for the sake of clarity, we have expressed matrix inversion in the form of division. Now, if we develop the geometric series of $h(\widetilde{\mathcal{S}})$, we obtain the following recursive expression that converges for all $\mu > 0$: 
\begin{equation}\label{proposed.filter.recursive.eq}
    h(\widetilde{\mathcal{S}}) = \left(\frac{2\mu}{2\mu + \lambda_{max}} \right) \sum_{t = 0}^{\infty} \left( { - \frac{\lambda_{max}}{2\mu + \lambda_{max}}} \right)^t \widetilde{\mathcal{S}}^t
\end{equation}
Eq. (\ref{proposed.filter.recursive.eq}) highlights our restriction to undirected graphs: if the spectrum is complex, then the recursion is not guaranteed to converge. We thus leave the extension of (\ref{proposed.filter.recursive.eq}) to directed graphs as future work. Finally, by setting
\begin{equation}
    \rho = \frac{2\mu}{2\mu + \lambda_{max}},
\end{equation}
\begin{equation}
    \psi = - \frac{\lambda_{max}}{2\mu + \lambda_{max}},
\end{equation}
and applying (\ref{proposed.filter.recursive.eq}) to $y$, we obtain the following recursive equation
\begin{equation}\label{proposed.recursive.pi.eq}
    \widetilde{p}^{(t)} = \rho y + \psi \widetilde{\mathcal{S}} \widetilde{p}^{(t-1)} 
\end{equation}
that converges to $\widetilde{\pr}_{\mu}(y)$ as $t \to \infty$. Clearly, Eq. (\ref{proposed.recursive.pi.eq}) allows us to set $\widetilde{p}^{(0)} = \pr_{\mu}(y)$ and drive the trajectory of the recursion close to the fixed point, reducing the number of iterations towards convergence.

However, Eq. (\ref{proposed.recursive.pi.eq}) is not fully satisfactory. Firstly, using it to update a PageRank vector involves sending messages across the entire network due to the fact that $\widetilde{p}^{(0)} = \pr_{\mu}(y)$ is dense, even though the update mostly takes place in the perturbed graph region. Secondly, it follows power-iteration convergence speed, which is slow. To amend these issues, we extend the result of \cite{Yoon_2018_Fast} in Eq. (\ref{pr.powerit.update}) to a large family of PageRank methods by means of the following Lemma. For the sake of notation clarity, we refer to the convergent state of (\ref{proposed.recursive.pi.eq}) by $\widetilde{\pr}_{\rho, \psi}(y) = \widetilde{p}^{(\infty)}$.

\begin{lemma}\label{proposed.lemma}
Given a fixed set of coefficients $\rho$, $\psi$, and initial condition $y$, we have that
\begin{equation}\label{proposed.lemma.update.eq}
\widetilde{\pr}_{\rho, \psi}(y) = \pr_{\rho, \psi}(y) + \frac{1}{\rho} \widetilde{\pr}_{\rho, \psi}(r) 
\end{equation} 
where 
\begin{equation}
r = \psi \left[ \widetilde{\mathcal{S}} - \mathcal{S}\right] \pr_{\rho, \psi}(y)
\end{equation}
\end{lemma}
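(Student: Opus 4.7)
The plan is to convert the recursive definition of $\widetilde{\pr}_{\rho,\psi}(y)$ into a closed-form resolvent expression and then recognize the claimed identity as an instance of the first resolvent identity $A^{-1}-B^{-1}=A^{-1}(B-A)B^{-1}$.

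First I would pass to the fixed point. Taking $t\to\infty$ in the recursion (\ref{proposed.recursive.pi.eq}) and using that $\psi\widetilde{\mathcal{S}}$ has spectral radius strictly less than one (since the eigenvalues of $\widetilde{\mathcal{S}}$ lie in $[-1,1]$ after the rescaling (\ref{eq.renormalization}) and $|\psi|=\lambda_{max}/(2\mu+\lambda_{max})<1$ for $\mu>0$), the operator $\mathbb{I}-\psi\widetilde{\mathcal{S}}$ is invertible and
\begin{equation*}
\widetilde{\pr}_{\rho,\psi}(y)=\rho\bigl(\mathbb{I}-\psi\widetilde{\mathcal{S}}\bigr)^{-1}y,
\end{equation*}
and similarly $\pr_{\rho,\psi}(y)=\rho(\mathbb{I}-\psi\mathcal{S})^{-1}y$. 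In particular, the map $y\mapsto\widetilde{\pr}_{\rho,\psi}(y)$ is linear, a fact I will use freely.

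Next I would substitute the definition of $r$ into the right-hand side of (\ref{proposed.lemma.update.eq}). Using linearity,
\begin{equation*}
\tfrac{1}{\rho}\widetilde{\pr}_{\rho,\psi}(r)=\bigl(\mathbb{I}-\psi\widetilde{\mathcal{S}}\bigr)^{-1}\psi\bigl(\widetilde{\mathcal{S}}-\mathcal{S}\bigr)\pr_{\rho,\psi}(y),
\end{equation*}
so the claim reduces to showing
\begin{equation*}
\bigl(\mathbb{I}-\psi\widetilde{\mathcal{S}}\bigr)^{-1}=\bigl(\mathbb{I}-\psi\mathcal{S}\bigr)^{-1}+\bigl(\mathbb{I}-\psi\widetilde{\mathcal{S}}\bigr)^{-1}\psi\bigl(\widetilde{\mathcal{S}}-\mathcal{S}\bigr)\bigl(\mathbb{I}-\psi\mathcal{S}\bigr)^{-1},
\end{equation*}
after factoring out $\rho$ and applying to $y$. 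This is precisely the resolvent identity with $A=\mathbb{I}-\psi\widetilde{\mathcal{S}}$ and $B=\mathbb{I}-\psi\mathcal{S}$, since then $B-A=\psi(\widetilde{\mathcal{S}}-\mathcal{S})$. One can verify it in one line by left-multiplying by $A$ and right-multiplying by $B$.

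I do not expect a serious obstacle: the only subtle point is justifying the closed-form resolvent expression for $\widetilde{\pr}_{\rho,\psi}$, which requires $\rho(\mathbb{I}-\psi\widetilde{\mathcal{S}})^{-1}$ to exist and the Neumann series $\sum_t(\psi\widetilde{\mathcal{S}})^t$ to converge. Both follow from the spectral normalization already performed in Section~\ref{sec3.2}. Once that is in place, the lemma is a direct algebraic consequence of the resolvent identity.
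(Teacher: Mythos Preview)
Your argument is correct, but it proceeds quite differently from the paper. The paper works directly with the warm-restart recursion~(\ref{proposed.recursive.pi.eq}): it sets $\widetilde{p}^{(0)}=\pr_{\rho,\psi}(y)$, computes $\widetilde{p}^{(1)}$ and $\widetilde{p}^{(2)}$ explicitly by repeatedly substituting the fixed-point identity $\rho y=\pr_{\rho,\psi}(y)-\psi\mathcal{S}\,\pr_{\rho,\psi}(y)$, observes the pattern $\widetilde{p}^{(n)}=\pr_{\rho,\psi}(y)+\sum_{t=0}^{n-1}(\psi\widetilde{\mathcal{S}})^{t}r$, and then lets $n\to\infty$ to recover $\frac{1}{\rho}\widetilde{\pr}_{\rho,\psi}(r)$ as the Neumann series. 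You instead jump straight to the closed form $\widetilde{\pr}_{\rho,\psi}(y)=\rho(\mathbb{I}-\psi\widetilde{\mathcal{S}})^{-1}y$ and recognize the statement as the first resolvent identity $A^{-1}-B^{-1}=A^{-1}(B-A)B^{-1}$ applied to $A=\mathbb{I}-\psi\widetilde{\mathcal{S}}$ and $B=\mathbb{I}-\psi\mathcal{S}$. Your route is shorter and exposes the lemma as a standard perturbation identity; the paper's iterate-by-iterate derivation is more hands-on and makes explicit how the warm restart propagates $r$ through the recursion, which ties the result more visibly to the algorithm being analyzed.
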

\begin{proof}
We start with a warm restart in recursion (\ref{proposed.recursive.pi.eq}) as follows
\begin{align}
\widetilde{p}^{(1)} &= \rho y + \psi \widetilde{ \mathcal{S} } \widetilde{p}^{(0)} \\
 &= \rho y + \psi \widetilde{ \mathcal{S} } \pr_{\rho, \psi}(y) \\
 &= \pr_{\rho, \psi}(y) - \psi \mathcal{S} \pr_{\rho, \psi}(y) + \psi \widetilde{ \mathcal{S} } \pr_{\rho, \psi}(y) \\
 &=\pr_{\rho, \psi}(y) + \psi \left[ \widetilde{ \mathcal{S} } - \mathcal{S} \right] \pr_{\rho, \psi}(y) \\
 &=\pr_{\rho, \psi}(y) + r
\end{align}
Then, for the second iteration we have
\begin{align}
\widetilde{p}^{(2)} &= \rho y + \psi \widetilde{ \mathcal{S} } \widetilde{p}^{(1)} \\
 &= \rho y + \psi \widetilde{ \mathcal{S} }  \left( \pr_{\rho, \psi}(y) + r \right) \\
  &= \pr_{\rho, \psi}(y) - \psi \mathcal{S} \pr_{\rho, \psi}(y) + \psi \widetilde{ \mathcal{S} }  \left( \pr_{\rho, \psi}(y) + r \right) \\
    &= \pr_{\rho, \psi}(y) + \psi  \left[ \widetilde{ \mathcal{S} } - \mathcal{S} \right] \pr_{\rho, \psi}(y) +   \psi \widetilde{ \mathcal{S} } r \\
 &= \rho y + \psi \widetilde{ \mathcal{S} } \pr_{\rho, \psi}(y) + \psi \widetilde{ \mathcal{S} } r \\
  &= \pr_{\rho, \psi}(y) + \psi r + \psi \widetilde{ \mathcal{S} } r 
\end{align}
By successive applications of this procedure, we have that 
\begin{align}
\widetilde{p}^{(\infty)} &=  \pr_{\rho, \psi}(y) + \sum_{t = 0}^{\infty} \psi^t \widetilde{ \mathcal{S} }^t r \\
 &=  \pr_{\rho, \psi}(y) + \frac{1}{\rho}\widetilde{pr}_{\rho, \psi}(r)
\end{align}
\end{proof}
\begin{algorithm}[t!]
\caption{Distributed local updating of general PageRank methods via Chebyshev polynomials}
\label{LocalChebyUpdating.alg}
\begin{algorithmic}
        \State \textbf{Input}: $\mu$, $\pr_{\rho, \psi}(y)$, $\mathcal{R}$, $\widetilde{\mathcal{R}}$ and $\rho$, $\psi$, $\phi$.  
        \State \textbf{Output}: $\widetilde{\pr}_{\rho, \psi}(y)$. \vspace{5pt}
        \State $r =  \psi \phi [ \widetilde{\mathcal{R}} - \mathcal{R} ] \pr_{\rho, \psi}(y)$
        \State Compute $\widetilde{pr}_{\rho, \psi}(r)$ via Algorithm \ref{ChebyshevPoly.alg} 
        \State Return $\widetilde{pr}_{\rho, \psi}(y) = \pr_{\rho, \psi}(y) + \widetilde{pr}_{\rho, \psi}(r)/\rho $
\end{algorithmic}
\end{algorithm}
\noindent Lemma \ref{proposed.lemma} has several implications. Firstly, it states that updating a PageRank vector amounts to computing another PageRank vector with an initial distribution $r$ that is completely localized (non-zero) in the 1-hop vicinity of the nodes that changed between $\mathcal{G}$ and $\widetilde{\mathcal{G}}$. Secondly, since computing $\widetilde{\pr}_{\rho, \psi}(r)$ involves diffusing $r$ through the graph, then the locality of $r$ implies that only a few messages are enough to make the information necessary for an update reach the affected nodes. Thirdly, it makes it obvious that it is not necessary to use the slow recursive equation (\ref{proposed.recursive.pi.eq}) to perform the update. Instead, $\widetilde{\pr}_{\rho, \psi}(r)$ can be more efficiently computed by means of the Chebyshev polynomials. Therefore, our proposed algorithm consists in leveraging Eq. (\ref{proposed.lemma.update.eq}) and in approximating $\widetilde{\pr}_{\rho, \psi}(r)$ by means of Chebyshev polynomials. It is summarized in Algorithm \ref{LocalChebyUpdating.alg}. 

\section{Numerical Evaluation}\label{sec4}
{\bf Goals}. In this section, we evaluate the performance of the proposed algorithm \footnote{Code available at \url{https://github.com/estbautista/PageRank_Updating_Chebyshev_Paper}}. In particular, our goals are: (\emph{i}) to assess the performance gains obtained by the algorithm with respect to computing PageRank from scratch using the Chebyshev polynomials; (\emph{ii}) to demonstrate that the proposed algorithm can be used to update both standard and generalized PageRank vectors; (\emph{iii}) to assess how the performance of the algorithm degrades as perturbations grow in size; (\emph{iv}) to compare the proposed algorithm with the state-of-the-art alternatives; and (\emph{v}) to evaluate the performance of the algorithm in a tracking scenario where a PageRank vector needs to be updated during a long period of time.

{\bf Metrics}. We assess performance in terms of the number of messages that need be exchanged in order to approximate the evolved PageRank vector within a specified relative error ($\ell_2$-norm sense). We use the number of messages rather than the routine's running time because we consider this metric to better capture the complexity of a local and distributed algorithm.

{\bf Data}. We perform our experiments in the Tech-AS-Topology temporal network \cite{Rossi_2015_Network}, which is a real-world network from an autonomous system with 34.8K nodes and 171.4K edges organized in 32.8K graph snapshots. This network contains both sparse and dense regions, meaning that perturbations can affect small or large regions. For the experiments, we pre-process the data by turning the snapshots into undirected graphs, resulting in a total of 215.4K timestamped edges. The first graph snapshot in the sequence contains 32K nodes and 111.6K edges. Then, in successive snapshots, new edges adhere into the network branching nodes already present or new nodes joining the graph. The dataset only contains edge additions, therefore we simulate an edge deletion setting (see experiment 4) by reversing the time axis: we consider the originally last snapshot as the new first one and the originally first snapshot as the new last one. From this perspective, a new snapshot causes edges to disappear or nodes to leave. 

\subsection{Experiment 1}
In our first experiment, we address goals (\emph{i}) and (\emph{ii}). For this, we fix a small graph perturbation. Then, as we vary the allowed number of messages, we measure how well Algorithm \ref{LocalChebyUpdating.alg} approximates the true PageRank of the evolved graph. For comparison purposes, we apply the same test to a computation from scratch using Algorithm \ref{ChebyshevPoly.alg}. To show that the proposed algorithm can update generalized PageRank propositions as well as standard PageRank, we employ it to update standard PageRank vectors and the recently proposed $L^\gamma$-PageRank vectors from \cite{Bautista_2019_Gamma}. For this experiment, we use the first snapshot from the Tech-AS-Topology network as initial graph. Then, we use the second snapshot of the network as the perturbation: it contains 120 new edges and 1 new node joining the graph. We choose a vertex at random and use its indicator function as the initial distribution $y$ (a common setting in local graph clustering). We use $\alpha = 0.5$, measure relative error in the $\ell_2$ sense, and repeat the experiment for 20 realizations of $y$. 

Results of Experiment 1 are displayed in Figure \ref{fig_exp1}. The left panel shows the result of updating standard PageRank, while the right panel depicts the result of updating the generalized $L^\gamma$-PageRank \cite{Bautista_2019_Gamma}. They show that the proposed algorithm successfully updates both standard and generalized PageRank vectors. In both cases, the proposed algorithm offers significant approximation improvements compared to computing from scratch. We additionally verify that our updating algorithm converges at the same rate than the Chebyshev polynomials from scratch. This amends a tradeoff that needs to be made with current updating algorithms: they are a good option if only few iterations are allowed but their slow convergence makes them worse than Chebyshev polynomials from scratch if several iterations are needed \cite{Bautista_2019_Laplacian}. Lastly, we notice that the error bars (standard error) are negligible, indicating that the performance of the algorithm is irrespective of the choice of initial seed $y$ and, consequently, of a particular PageRank vector to update. 

\subsection{Experiment 2}
\begin{figure}[t!]
	\centering
	\subfloat{\includegraphics[width=0.24\textwidth]{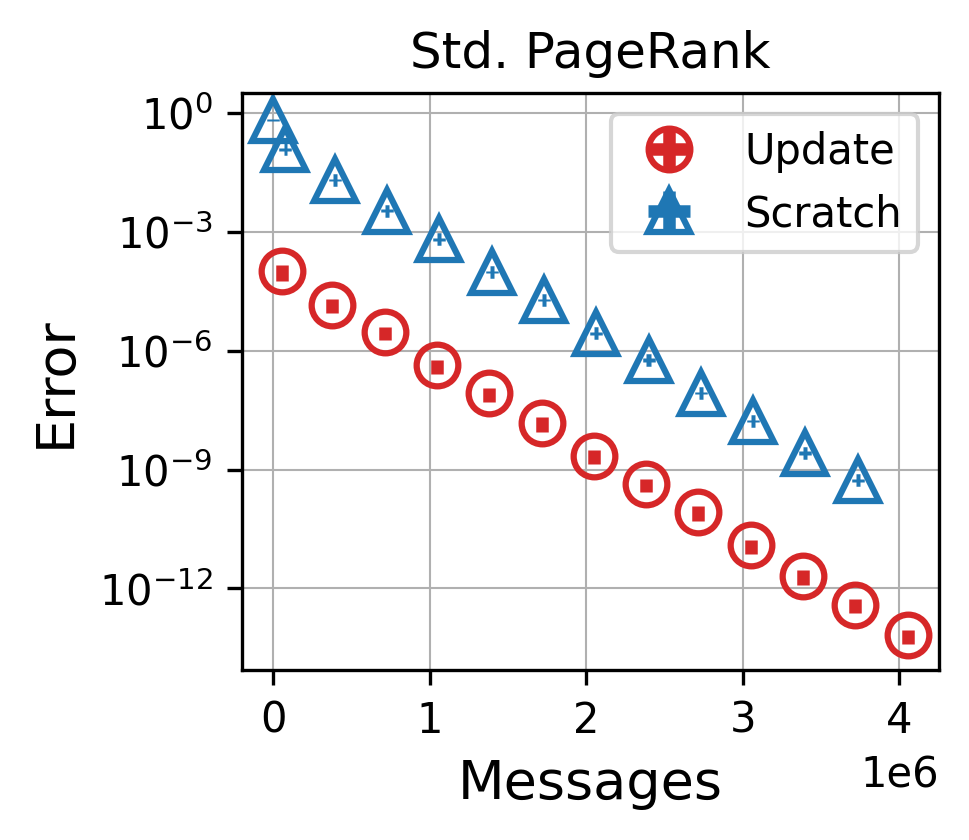}} \label{fig1a}
	\hfill
	\subfloat{\includegraphics[width=0.24\textwidth]{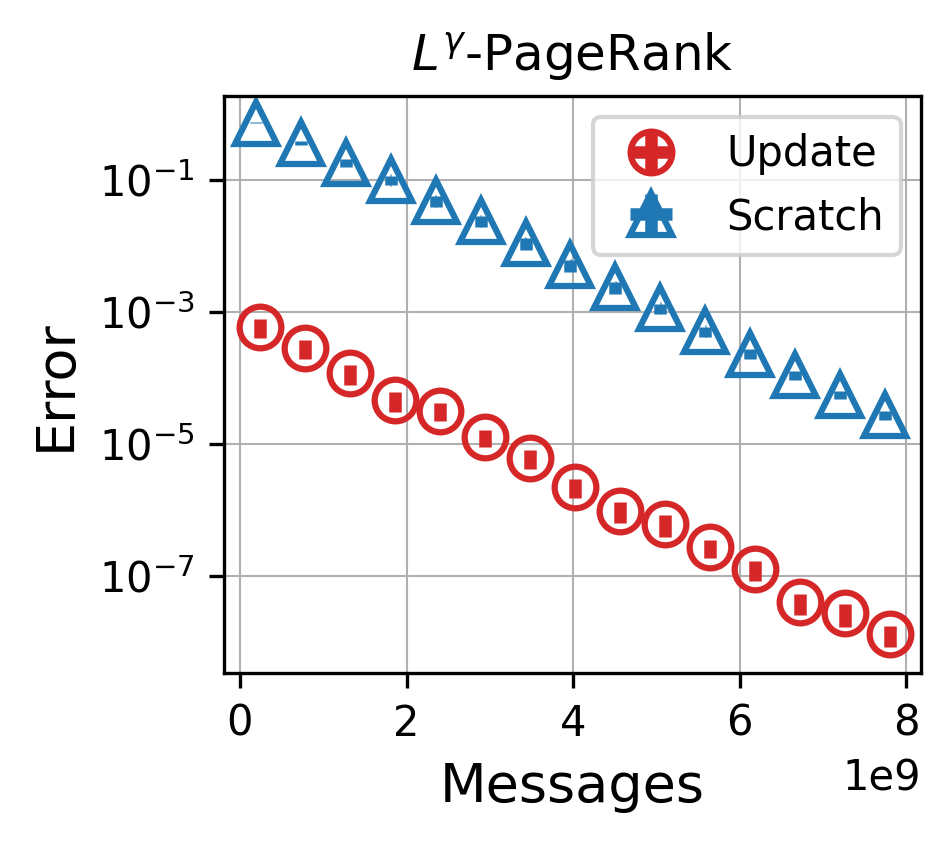}}
	\caption{Experiment 1. \small{Performance of proposed algorithm vs a computation from scratch. We assess approximation error as larger computational budgets are given to the algorithms. Left: updating standard PageRank. Right: updating a generalized PageRank.}}\label{fig_exp1}
\end{figure}
\begin{figure}[t!]
	\centering
	\includegraphics[width=0.49\textwidth]{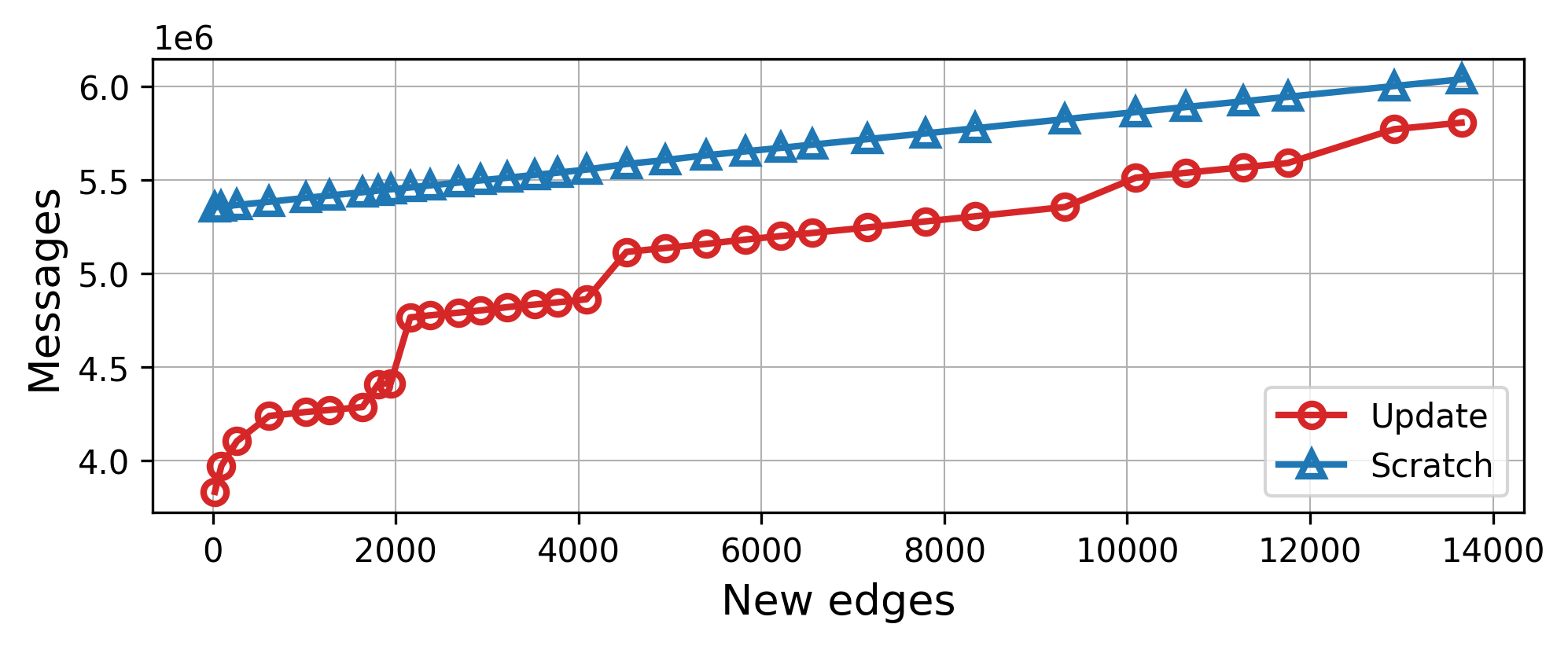}
	\caption{Experiment 2. \small{Sensitivity of proposed algorithm to perturbations. We assess the number of messages required to update within an error of $1\times10^{-13}$ as the number of added edges grows. The alternative from scratch (same conditions) is used as reference.}}\label{fig_exp2}
\end{figure}
In our second experiment, we address goal (\emph{iii}). For this, we fix a target approximation error. Then, as we vary the perturbation size, we measure the number of messages required by Algorithm \ref{LocalChebyUpdating.alg} to approximate the true PageRank of the evolved graph within the specified error. For comparison purposes, we apply the same test to a computation from scratch using Algorithm \ref{ChebyshevPoly.alg}. Since larger perturbations imply larger updates, our proposed algorithm should be highly sensitive to the size of perturbations. On the other hand, a computation from scratch should only augment messages proportionally to $\vert \widetilde{\mathcal{E}} \vert - \vert \mathcal{E}\vert$. Therefore, we aim to empirically spot the point where the update needed is so large that our updating algorithm offers no benefit over a computation from scratch. For this experiment, we use the first snapshot from the Tech-AS-Topology network as initial graph. Then, we control the size of the perturbation by aggregating an increasingly larger number of subsequent snapshots. We set $y$ as the indicator function of a random vertex and update its associated standard PageRank vector. We use $\alpha = 0.5$ and set the error at $1\times10^{-13}$.

Results of Experiment 2 are displayed in Figure \ref{fig_exp2}. For small perturbations, the number of messages needed by our algorithm to attain the desired error is small. This number increases as the perturbation grows in size, reaching a point where the updating algorithm does not provide any advantage with respect to a computation from scratch. For the Tech-AS-Topology network, this operational limit occurs for a perturbation of around 4000 new edges, which corresponds to roughly $3\%$ of the edges from the initial graph. This confirms that our updating algorithm should preferably be used when the changes in the graph are small. 

\subsection{Experiment 3}
\begin{figure}[t!]
	\centering
	\includegraphics[width=0.49\textwidth]{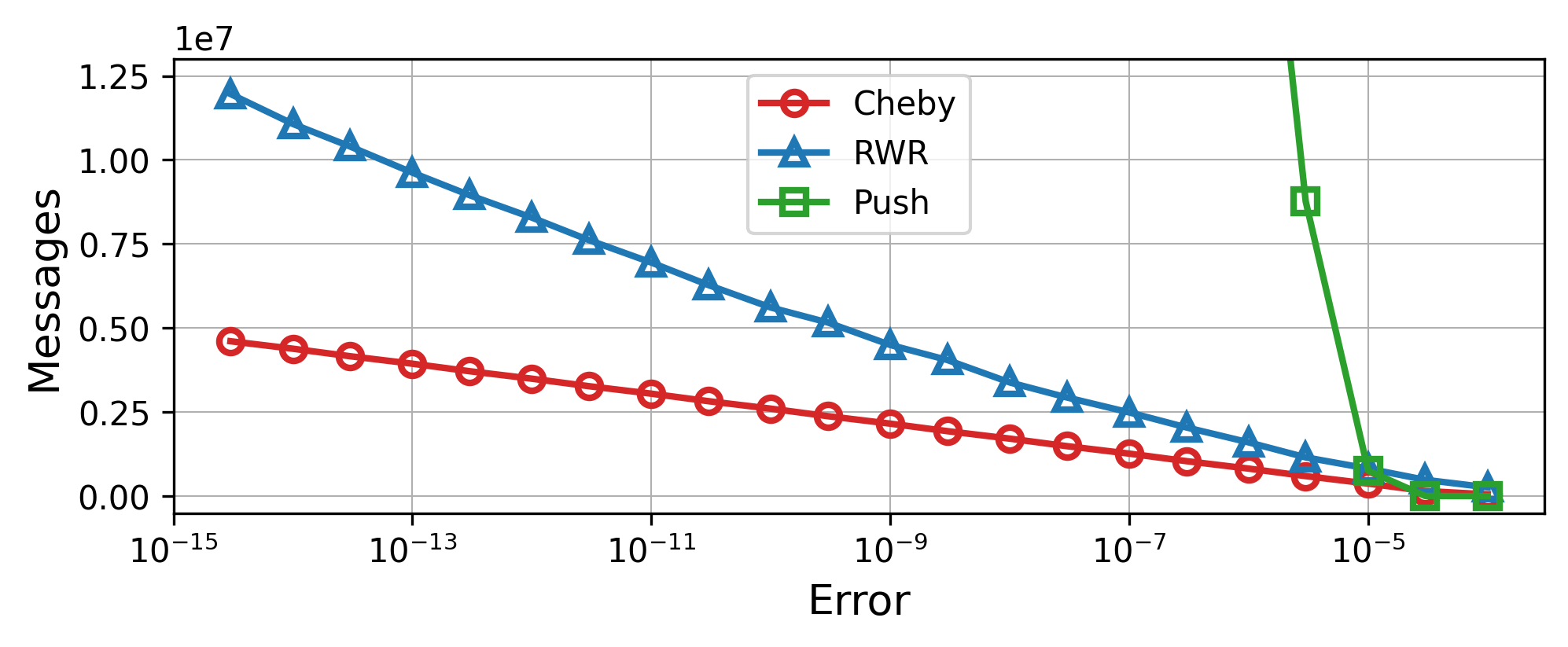}
	\caption{Experiment 3. \small{Comparison of proposed algorithm with state-of-the-art alternatives. We assess the number of messages required to update a PageRank vector within an increasingly smaller approximation error.}}\label{fig_exp3}
\end{figure}
\begin{figure}[t!]
	\centering
	\includegraphics[width=0.49\textwidth]{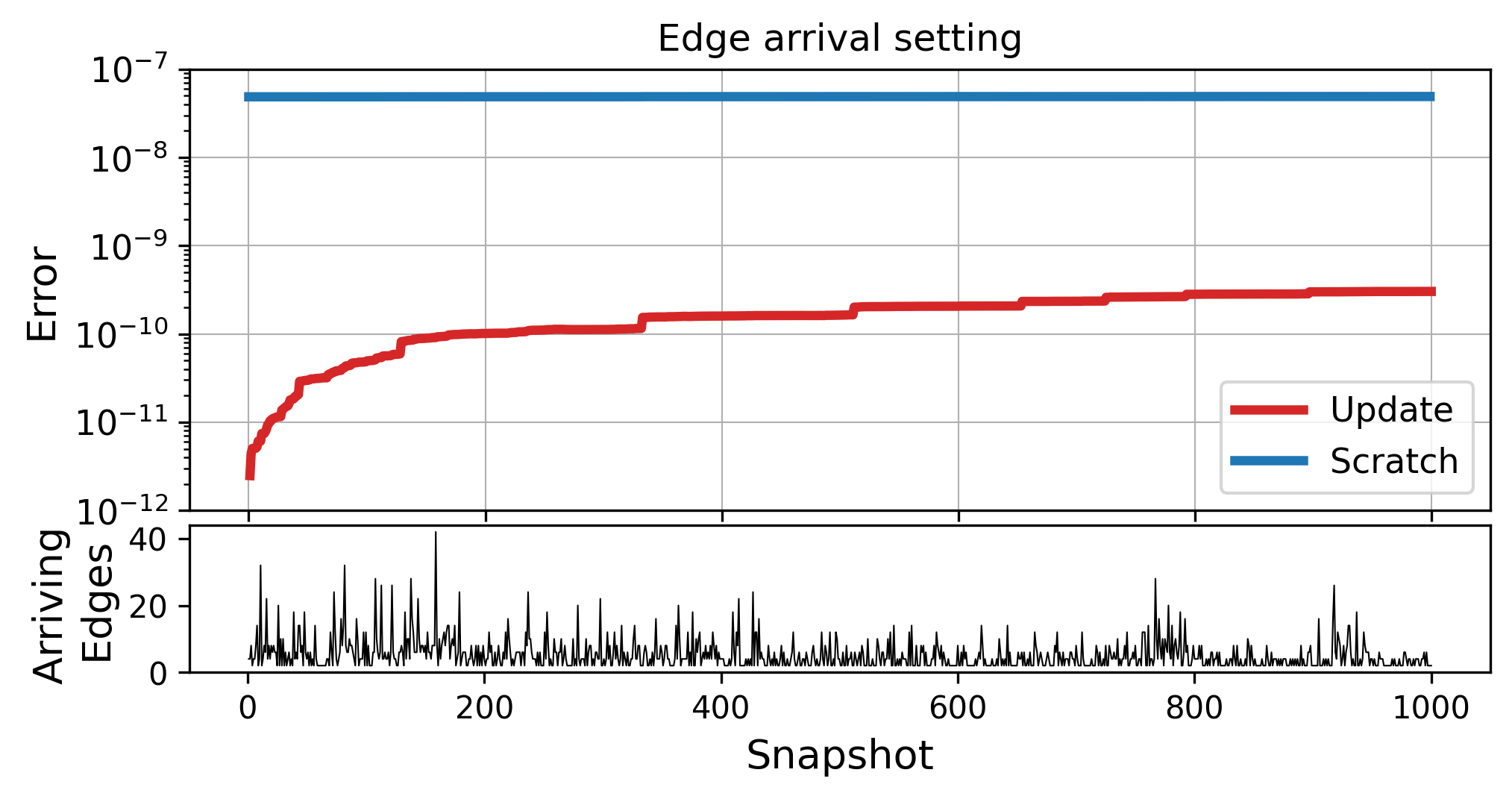}
	\includegraphics[width=0.49\textwidth]{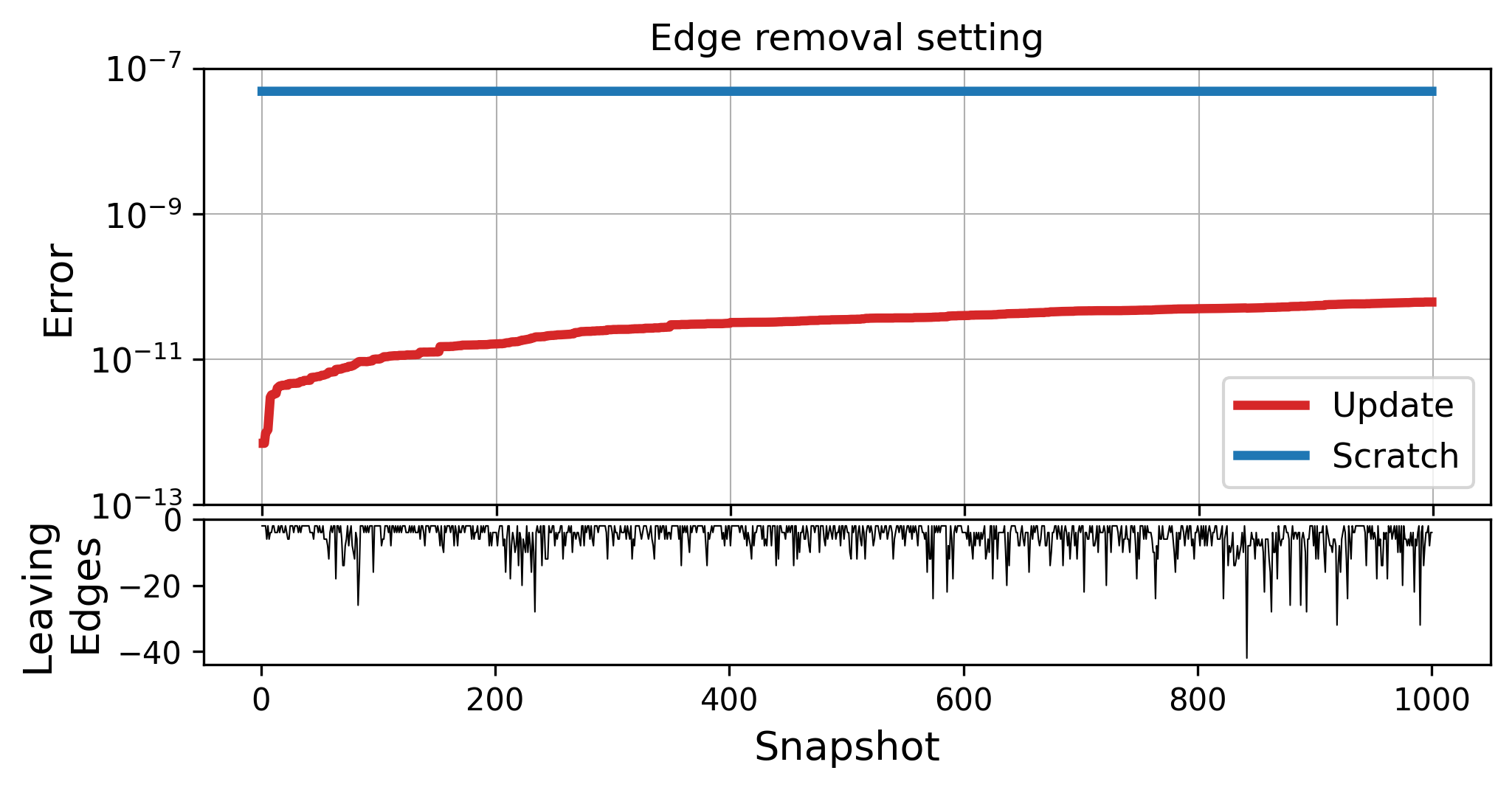}
	\caption{Experiment 4. \small{Performance of proposed algorithm in a tracking setting. We update a PageRank vector during a long period of time using $K = 15$ fixed communication rounds. We measure error with respect to the exact PageRank of each snapshot. The alternative from scratch (same conditions) is used as reference. Top: Edge arrival setting. Bottom: Edge removal setting.}}\label{fig_exp4}
\end{figure}
In our next third experiment, we address goal (\emph{iv}). For this, we fix a graph perturbation. Then, as we vary the target approximation error, we measure the number of message exchanges our proposed Algorithm \ref{LocalChebyUpdating.alg} and the state-of-the-art alternatives \cite{Yoon_2018_Fast, Ohsaka_2015_Efficient} require to approximate the true PageRank of the evolved graph within the specified error. We stress that our algorithm and the RWR one \cite{Yoon_2018_Fast} are distributed and can thus be assessed in terms of transmitted messages. Yet, the push algorithm \cite{Ohsaka_2015_Efficient} is centralized and normally studied in terms of push operations rather than messages. We notice that the complexity of each push operation is dominated by a matrix-vector multiplication that can be interpreted as transmitted messages, thus we track this quantity for the Push method. For this experiment, we use the first snapshot from the Tech-AS-Topology network as initial graph. Then, we use the second snapshot of the network as the perturbation. We set $y$ as the indicator function of a random vertex and update its associated standard PageRank vector. We use $\alpha = 0.5$.

Results of Experiment 3 are displayed in Figure \ref{fig_exp3}. The proposed algorithm outperforms state-of-the-art alternatives for local PageRank updating, being able to attain any desired relative approximation error with significantly less messages. Indeed, for approximation errors in the order of $10^{-14}$, which are the best approximations we can obtain using Python's float64 data types, the proposed Algorithm requires roughly $35\%$ less messages than the second best method of RWR. We notice that the push algorithm is not competitive for very precise approximations, as the number of message operations it requires quickly becomes large.

\subsection{Experiment 4}

In our fourth experiment, we address goal (\emph{v}). For this, we fix a number of communication rounds (K). Then, as new graph snapshots arrive, we estimate the PageRank vector of the current snapshot by updating the PageRank vector estimated for the previous snapshot using Algorithm \ref{LocalChebyUpdating.alg}. To demonstrate that our algorithm addresses equally edge additions and deletions, we run the experiment in both settings. Since the data only contains edge additions, we simulate deletions by reversing the time axis, meaning that we start from the evolved network and run backwards to the primitive one. For comparison purposes, we estimate the exact PageRank of the current snapshot via a computation from scratch using Algorithm \ref{ChebyshevPoly.alg} under the same number of communication rounds (K). We stress that this is an extremely challenging task for the updating algorithm because the vector to update is no longer the exact PageRank vector of the previous snapshot but an approximation of it, thus meaning that errors accumulate over time. Therefore, we aim to empirically spot if our method can maintain a PageRank vector for a long time or if it soon becomes worse than the approximation obtained by the method from scratch. For this experiment, we use the aggregated first 100 snapshots from the Tech-AS-Topology network as initial graph. Then, we track the standard PageRank vector during the following 1000 snapshots (reverse for edge removals). The only exact PageRank vector is given to the initial graph. We set $y$ as the indicator function of a random vertex, $K = 15$ and $\alpha = 0.5$.

Results of Experiment 4 are displayed in Figure \ref{fig_exp4}. The upper panel shows the relative error between the tracked vectors and the true PageRank of each snapshot when edge additions are considered, while the bottom panel depicts the same quantities for the case of edge deletions. The size of the perturbation in each new snapshot is shown as additional information in both panels. In both cases, the proposed algorithm is able to effectively track the PageRank vector during the entire time horizon. For early times, the updating algorithm returns extremely precise approximations: up to five orders of magnitude improvement with respect to a computation from scratch with the same computational budget. Then, we notice that errors steadily accumulate. However, it is at a sufficiently slow rate that, during the 1000 snapshots, the tracked vector is at least two orders of magnitude closer to the exact PageRank than the alternative from scratch.

\section{Conclusion}\label{sec5}
We proposed a Chebyshev polynomial-based distributed algorithm for local PageRank updating. We showed that the proposed algorithm has faster convergence than state-of-the-art alternatives, bringing us closer to the goal of effortlessly maintaining PageRank vectors in real world networks. Additionally, the algorithm can be used to update more general formulations of PageRank. These improvements were possible due to a novel updating equation that encompasses a family of PageRank formulations and that makes it direct to employ Chebyshev polynomials to locally solve the updating challenge. Numerical evaluations showed that the proposed algorithm is an effective tool for tracking PageRank vectors for a long period of time when changes in the graph are small. An interesting prospective work would be to extend these results to undirected graphs that have non-real eigenvalues and to the case in which PageRank parameters also change over time.

\section{Acknowledgements}
This work is funded in part by the ANR (French National Agency of Research) under the Limass (ANR-19-CE23-0010) and FiT LabCom grants. The authors would also like to thank P. Abry and P. Gon\c{c}alves for helpful discussions.
\bibliographystyle{ieeetr}
\bibliography{Biblio}

\begin{thebibliography}{10}

\bibitem{Page_1999_Pagerank}
L.~Page, S.~Brin, R.~Motwani, and T.~Winograd, ``The pagerank citation ranking:
  Bringing order to the web.,'' tech. rep., Stanford InfoLab, 1999.

\bibitem{Ding_2003_PageRank}
C.~Ding, X.~He, P.~Husbands, H.~Zha, and H.~Simon, ``Pagerank, hits and a
  unified framework for link analysis,'' in {\em Proceedings of the 2003 SIAM
  International Conference on Data Mining}, pp.~249--253, SIAM, 2003.

\bibitem{Haveliwala_2003_topic}
T.~H. Haveliwala, ``Topic-sensitive pagerank: A context-sensitive ranking
  algorithm for web search,'' {\em IEEE transactions on knowledge and data
  engineering}, vol.~15, no.~4, pp.~784--796, 2003.

\bibitem{Chung_2009_Local}
F.~Chung, ``A local graph partitioning algorithm using heat kernel pagerank,''
  {\em Internet Mathematics}, vol.~6, no.~3, pp.~315--330, 2009.

\bibitem{Andersen_2006_Local}
R.~Andersen, F.~Chung, and K.~Lang, ``Local graph partitioning using pagerank
  vectors,'' in {\em 2006 47th Annual IEEE Symposium on Foundations of Computer
  Science (FOCS'06)}, pp.~475--486, IEEE, 2006.

\bibitem{Tabrizi_2013_Personalized}
S.~A. Tabrizi, A.~Shakery, M.~Asadpour, M.~Abbasi, and M.~A. Tavallaie,
  ``Personalized pagerank clustering: A graph clustering algorithm based on
  random walks,'' {\em Physica A: Statistical Mechanics and its Applications},
  vol.~392, no.~22, pp.~5772--5785, 2013.

\bibitem{Avrachenkov_2012_Classification}
K.~Avrachenkov, P.~Gon{\c{c}}alves, A.~Legout, and M.~Sokol, ``Classification
  of content and users in bittorrent by semi-supervised learning methods,'' in
  {\em 2012 8th International Wireless Communications and Mobile Computing
  Conference (IWCMC)}, pp.~625--630, IEEE, 2012.

\bibitem{Merkurjev_2018_Semi}
E.~Merkurjev, A.~L. Bertozzi, and F.~Chung, ``A semi-supervised heat kernel
  pagerank mbo algorithm for data classification,'' {\em Communications in
  Mathematical Sciences}, vol.~16, no.~5, pp.~1241--1265, 2018.

\bibitem{Dostal_2014_Exploration}
M.~Dostal, M.~Nykl, and K.~Je{\v{z}}ek, ``Exploration of document
  classification with linked data and pagerank,'' in {\em Intelligent
  Distributed Computing VII}, pp.~37--43, Springer, 2014.

\bibitem{Avrachenkov_2012_Generalized}
K.~Avrachenkov, A.~Mishenin, P.~Gon{\c{c}}alves, and M.~Sokol, ``Generalized
  optimization framework for graph-based semi-supervised learning,'' in {\em
  Proceedings of the 2012 SIAM International Conference on Data Mining},
  pp.~966--974, SIAM, 2012.

\bibitem{Fontugne_2019_BGP}
R.~Fontugne, E.~Bautista, C.~Petrie, Y.~Nomura, P.~Abry, P.~Gon{\c{c}}alves,
  K.~Fukuda, and E.~Aben, ``Bgp zombies: An analysis of beacons stuck routes,''
  in {\em International Conference on Passive and Active Network Measurement},
  pp.~197--209, Springer, 2019.

\bibitem{Yoon_2019_Fast}
M.~Yoon, B.~Hooi, K.~Shin, and C.~Faloutsos, ``Fast and accurate anomaly
  detection in dynamic graphs with a two-pronged approach,'' in {\em
  Proceedings of the 25th ACM SIGKDD International Conference on Knowledge
  Discovery \& Data Mining}, pp.~647--657, 2019.

\bibitem{Yao_2012_Anomaly}
Z.~Yao, P.~Mark, and M.~Rabbat, ``Anomaly detection using proximity graph and
  pagerank algorithm,'' {\em IEEE Transactions on Information Forensics and
  Security}, vol.~7, no.~4, pp.~1288--1300, 2012.

\bibitem{AlJanabi_2019_Recommendation}
S.~Al\_Janabi and N.~Kadiam, ``Recommendation system of big data based on
  pagerank clustering algorithm,'' in {\em International Conference on Big Data
  and Networks Technologies}, pp.~149--171, Springer, 2019.

\bibitem{Zhang_2009_Collaborative}
Y.~Zhang, N.~Zhang, and J.~Tang, ``A collaborative filtering tag recommendation
  system based on graph,'' {\em ECML PKDD discovery challenge}, pp.~297--306,
  2009.

\bibitem{Nguyen_2015_Evaluation}
P.~Nguyen, P.~Tomeo, T.~Di~Noia, and E.~Di~Sciascio, ``An evaluation of simrank
  and personalized pagerank to build a recommender system for the web of
  data,'' in {\em Proceedings of the 24th International Conference on World
  Wide Web}, pp.~1477--1482, 2015.

\bibitem{Langville_2004_Deeper}
A.~N. Langville and C.~D. Meyer, ``Deeper inside pagerank,'' {\em Internet
  Mathematics}, vol.~1, no.~3, pp.~335--380, 2004.

\bibitem{Ipsen_2006_Mathematical}
I.~C. Ipsen and R.~S. Wills, ``Mathematical properties and analysis of google's
  pagerank,'' {\em Bol. Soc. Esp. Mat. Apl}, vol.~34, pp.~191--196, 2006.

\bibitem{Brezinski_2006_Pagerank}
C.~Brezinski and M.~Redivo-Zaglia, ``The pagerank vector: properties,
  computation, approximation, and acceleration,'' {\em SIAM Journal on Matrix
  Analysis and Applications}, vol.~28, no.~2, pp.~551--575, 2006.

\bibitem{Pretto_2002_Theoretical}
L.~Pretto, ``A theoretical analysis of google's pagerank,'' in {\em
  International Symposium on String Processing and Information Retrieval},
  pp.~131--144, Springer, 2002.

\bibitem{Bautista_2019_Gamma}
E.~Bautista, P.~Abry, and P.~Gon{\c{c}}alves, ``L$^\gamma$-pagerank for
  semi-supervised learning,'' {\em Applied Network Science}, vol.~4, no.~1,
  pp.~1--20, 2019.

\bibitem{DeNigris_2017_Fractional}
S.~De~Nigris, E.~Bautista, P.~Abry, K.~Avrachenkov, and P.~Gon{\c{c}}alves,
  ``Fractional graph-based semi-supervised learning,'' in {\em 2017 25th
  European Signal Processing Conference (EUSIPCO)}, pp.~356--360, IEEE.

\bibitem{Mai_2017_Counterintuitive}
X.~Mai and R.~Couillet, ``The counterintuitive mechanism of graph-based
  semi-supervised learning in the big data regime,'' in {\em 2017 IEEE
  International Conference on Acoustics, Speech and Signal Processing
  (ICASSP)}, pp.~2821--2825, IEEE, 2017.

\bibitem{Zhou_2011_Semi}
X.~Zhou and M.~Belkin, ``Semi-supervised learning by higher order
  regularization,'' in {\em Proceedings of the fourteenth international
  conference on artificial intelligence and statistics}, pp.~892--900, JMLR
  Workshop and Conference Proceedings, 2011.

\bibitem{Girault_2014_semi}
B.~Girault, P.~Gon{\c{c}}alves, E.~Fleury, and A.~S. Mor, ``Semi-supervised
  learning for graph to signal mapping: A graph signal wiener filter
  interpretation,'' in {\em 2014 IEEE International Conference on Acoustics,
  Speech and Signal Processing (ICASSP)}, pp.~1115--1119, IEEE, 2014.

\bibitem{Haveliwala_1999_Efficient}
T.~Haveliwala, ``Efficient computation of pagerank,'' tech. rep., Stanford,
  1999.

\bibitem{Kamvar_2004_Adaptive}
S.~Kamvar, T.~Haveliwala, and G.~Golub, ``Adaptive methods for the computation
  of pagerank,'' {\em Linear Algebra and its Applications}, vol.~386,
  pp.~51--65, 2004.

\bibitem{Fujiwara_2012_Efficient}
Y.~Fujiwara, M.~Nakatsuji, T.~Yamamuro, H.~Shiokawa, and M.~Onizuka,
  ``Efficient personalized pagerank with accuracy assurance,'' in {\em
  Proceedings of the 18th ACM SIGKDD international conference on Knowledge
  discovery and data mining}, pp.~15--23, 2012.

\bibitem{Bahmani_2011_Fast}
B.~Bahmani, K.~Chakrabarti, and D.~Xin, ``Fast personalized pagerank on
  mapreduce,'' in {\em Proceedings of the 2011 ACM SIGMOD International
  Conference on Management of data}, pp.~973--984, 2011.

\bibitem{Maehara_2014_Computing}
T.~Maehara, T.~Akiba, Y.~Iwata, and K.-i. Kawarabayashi, ``Computing
  personalized pagerank quickly by exploiting graph structures,'' {\em
  Proceedings of the VLDB Endowment}, vol.~7, no.~12, pp.~1023--1034, 2014.

\bibitem{Avrachenkov_2007_Monte}
K.~Avrachenkov, N.~Litvak, D.~Nemirovsky, and N.~Osipova, ``Monte carlo methods
  in pagerank computation: When one iteration is sufficient,'' {\em SIAM
  Journal on Numerical Analysis}, vol.~45, no.~2, pp.~890--904, 2007.

\bibitem{Berkhin_2006_Bookmark}
P.~Berkhin, ``Bookmark-coloring algorithm for personalized pagerank
  computing,'' {\em Internet Mathematics}, vol.~3, no.~1, pp.~41--62, 2006.

\bibitem{Ohsaka_2015_Efficient}
N.~Ohsaka, T.~Maehara, and K.-i. Kawarabayashi, ``Efficient pagerank tracking
  in evolving networks,'' in {\em Proceedings of the 21th ACM SIGKDD
  International Conference on Knowledge Discovery and Data Mining},
  pp.~875--884, 2015.

\bibitem{Bahmani_2010_Fast}
B.~Bahmani, A.~Chowdhury, and A.~Goel, ``Fast incremental and personalized
  pagerank,'' {\em Proc. VLDB Endow.}, vol.~4, p.~173–184, Dec. 2010.

\bibitem{Yoon_2018_Fast}
M.~Yoon, W.~Jin, and U.~Kang, ``Fast and accurate random walk with restart on
  dynamic graphs with guarantees,'' in {\em Proceedings of the 2018 World Wide
  Web Conference}, pp.~409--418, 2018.

\bibitem{Zhang_2016_Approximate}
H.~Zhang, P.~Lofgren, and A.~Goel, ``Approximate personalized pagerank on
  dynamic graphs,'' in {\em Proceedings of the 22nd ACM SIGKDD International
  Conference on knowledge discovery and data mining}, pp.~1315--1324, 2016.

\bibitem{yoon2020autonomous}
M.~Yoon, T.~Gervet, B.~Hooi, and C.~Faloutsos, ``Autonomous graph mining
  algorithm search with best speed/accuracy trade-off,'' {\em 2020 IEEE
  International Conference on Data Mining (ICDM)}, pp.~751--760, 2020.

\bibitem{Bautista_2019_Laplacian}
E.~Bautista~Ruiz, {\em {Laplacian Powers for Graph-Based Semi-Supervised
  Learning}}.
\newblock Theses, {Universit{\'e} de Lyon}, Nov. 2019.

\bibitem{Shuman_2011_Chebyshev}
D.~I. Shuman, P.~Vandergheynst, and P.~Frossard, ``Chebyshev polynomial
  approximation for distributed signal processing,'' in {\em 2011 International
  Conference on Distributed Computing in Sensor Systems and Workshops (DCOSS)},
  pp.~1--8, IEEE, 2011.

\bibitem{cheng_2019_iterative}
C.~Cheng, J.~Jiang, N.~Emirov, and Q.~Sun, ``Iterative chebyshev polynomial
  algorithm for signal denoising on graphs,'' in {\em 2019 13th International
  conference on Sampling Theory and Applications (SampTA)}, pp.~1--5, IEEE,
  2019.

\bibitem{tseng_2021_minimax}
C.-C. Tseng and S.-L. Lee, ``Minimax design of graph filter using chebyshev
  polynomial approximation,'' {\em IEEE Transactions on Circuits and Systems
  II: Express Briefs}, vol.~68, no.~5, pp.~1630--1634, 2021.

\bibitem{tian_2014_chebyshev}
D.~Tian, H.~Mansour, A.~Knyazev, and A.~Vetro, ``Chebyshev and conjugate
  gradient filters for graph image denoising,'' in {\em 2014 IEEE International
  Conference on Multimedia and Expo Workshops (ICMEW)}, pp.~1--6, IEEE, 2014.

\bibitem{defferrard_2016_convolutional}
M.~Defferrard, X.~Bresson, and P.~Vandergheynst, ``Convolutional neural
  networks on graphs with fast localized spectral filtering,'' {\em Advances in
  neural information processing systems}, vol.~29, pp.~3844--3852, 2016.

\bibitem{yan_2021_spatial}
B.~Yan, G.~Wang, J.~Yu, X.~Jin, and H.~Zhang, ``Spatial-temporal chebyshev
  graph neural network for traffic flow prediction in iot-based its,'' {\em
  IEEE Internet of Things Journal}, 2021.

\bibitem{Rossi_2015_Network}
R.~A. Rossi and N.~K. Ahmed, ``The network data repository with interactive
  graph analytics and visualization,'' in {\em AAAI}, 2015.

\end{thebibliography}

\end{document}